\newcommand\id{\leavevmode\hbox{\small1\kern-3.3pt\normalsize1}}
\newcommand{\s}{{\bf s}}
\newcommand{\tr}{\mbox{Tr}}
\newtheorem{theorem}{Theorem}
\newtheorem{corollary}[theorem]{Corollary}
\newtheorem{lemma}[theorem]{Lemma}
\begin{document}
\title{\textbf{Absolute Non-Violation of a Three-Setting Steering Inequality by Two-Qubit States} }

\author{Some Sankar Bhattacharya}
\email{somesankar@gmail.com}
\affiliation{Physics and Applied Mathematics Unit, Indian Statistical Institute, 203 B. T. Road, Kolkata 700108, India.}

\author{Amit Mukherjee}
\affiliation{Physics and Applied Mathematics Unit, Indian Statistical Institute, 203 B. T. Road, Kolkata 700108, India.}

\author{Arup Roy}
\affiliation{Physics and Applied Mathematics Unit, Indian Statistical Institute, 203 B. T. Road, Kolkata 700108, India.}

\author{Biswajit Paul}
\affiliation{Department of Mathematics, South Malda College, Malda, West Bengal, India}

\author{Kaushiki Mukherjee}
\affiliation{Department of Mathematics, Government Girls’ General Degree College, Ekbalpore, Kolkata, India.}

\author{Indranil Chakrabarty}
\affiliation{Center for Security, Theory and Algorithmic Research, International Institute of Information 
Technology-Hyderabad, Gachibowli, Telangana-500032, India.}

\author{C. Jebaratnam}
\affiliation{S. N. Bose National Centre for Basic Sciences, Salt Lake, Kolkata 700 098, India.}

\author{Nirman Ganguly}
\affiliation{Physics and Applied Mathematics Unit, Indian Statistical Institute, 203 B. T. Road, Kolkata 700108, India.}
\email{nirmanganguly@gmail.com}
\thanks{ At present on leave from Department of Mathematics, Heritage Institute of Technology, Kolkata-107,India.}

\begin{abstract}
 Steerability is a characteristic nonlocal trait of quantum states lying in between entanglement and Bell nonlocality. A given quantum state is considered to be steerable if it violates a suitably chosen steering inequality. A quantum state which otherwise satisfies a certain inequality can violate the inequality under a global change of basis i.e, if the state is transformed by a nonlocal unitary operation. Intriguingly there are states which preserve their non-violation(pertaining to the said inequality) under any global unitary operation. The present work explores the effect of global unitary operations on the steering ability of a quantum state which live in two qubits. We characterize such states in terms of a necessary and sufficient condition on their spectrum. Such states are also characterized in terms of some analytic characteristics of the set to which they belong.  Looking back at steerability the present work also provides a relation between steerability and quantum teleportation together with the derivation of a result related to the optimal violation of steering inequality . An analytic estimation of the size of such non-violating states in terms of purity is also obtained. Interestingly the estimation in terms of purity also gives a necessary and sufficient condition in terms of bloch parameters of the state.  Illustrations from some signature class of quantum states further underscore our observations. 
\end{abstract}
\pacs{03.67.Ac,03.67.Mn}
\date{\today}
\maketitle

\section{Introduction}

The ubiquitous roles of entanglement and non-locality\cite{Review Ent,Review NL,Bell64,CHSH69}are exemplified in many quantum computation and information processing tasks\cite{Bennett93,Bennett92,random, key,dw,game}. These two inequivalent features have been the cornerstone of Quantum Mechanics.\\ 
\indent In 1935 Einstein, Podolsky and Rosen contributed an argument claiming the incompleteness of Quantum Mechanics \cite{epr}.In the following year Schr\"{o}dinger envisaged the celebrated concept of `steering'\cite{schr}. Recently, Wiseman \emph{et al.} have developed this phenomena in the form of a task\cite{wisemanprl07,wisemanpra07}.
They argued that steering refers to the scenario where one party usually called Alice, wishes to convince the other party (commonly referred to as Bob) that she can steer or construct the conditional state on Bob’s side by making measurements on her part. This kind of interpretation of the phenomena have induced great interest in foundational research in recent times\cite{jevtic2014,rudolph2014,jevtic12014,brunneroneway2014,wittman2014,wiseman2014}.\\
\indent Apart from the foundational interest, the study of steering also finds applications in one-sided device independent scenario where only one party trusts his/her quantum device but the other party’s device is untrusted. As a concrete example it has been shown that steering
allows for secure quantum key distribution when one of the parties’ device cannot be trusted. One big advantage in this direction is that such scenarios are experimentally less
demanding than fully device-independent protocols (where both of the parties distrust their devices) and, at the same time, require less assumptions than standard quantum cryptographic scenarios .\\
\indent In 1964, Bell sought a way to demonstrate that certain correlations appearing in quantum
mechanics are incompatible with the notions of locality and reality a.k.a local-realism, through
an inequality involving measurement statistics. In 1969, Clauser–Horne–Shimony–Holt
(CHSH) proposed a set of simple Bell inequalities which are easy to realize experimentally
. In the same spirit of Bell’s inequality in nonlocality, several steering inequalities (SIs)
have been proposed, so that a violation of any such SI can render a state to be
steerable. 
To test EPR steering Reid first proposed a testable formulation for continuous-variable systems based on position-momentum uncertainty relation \cite{reid} which was experimentally tested by Ou et al \cite{ou}. Cavalcanti \emph{et al.} developed a general construction of experimental EPR-steering criteria based on the assumption of existence of LHS(Local Hidden state) model \cite{caval}. Importantly this general construction is applicable to both  discrete as well as continuous-variable observables and Reid's criterion appears as a special case of this general formulation.\\
\indent One of the most intriguing problem in quantum mechanics is to find the signature of entanglement\cite{gurvits1}.The necessary and sufficient criteria for identifying entanglement in lower dimensions($2 \otimes 2$ and $2 \otimes 3$) \cite{peres,horodecki} through negativity of the partial transpose fails in higher dimensions due to the presence of PPT bound entangled state\cite{bound}. However an effective method can be provided for detection of entanglement via entanglement witness\cite{horodecki,terhal,review}. Entanglement witnesses $W_E$ are hermitian operators having at least one negative eigenvalue which satisfy the inequalities (i) $Tr(W_E \varrho_{sep}) \geq 0, \forall$ separable states $\varrho_{sep}$ and (ii) $Tr(W_E\varrho_{ent}) < 0$ for at at least one entangled state $\varrho_{ent}$. The geometric form of the Hahn-Banach theorem states that points lying outside of a convex and closed state can be separated from it by a hyperplane\cite{holmes} which effectively tells the existence of entanglement witnesses for detection of entanglement\cite{horodecki}. As entanglement witnesses are hermitian they have been conveniently used in experimental detection of entanglement \cite{barbieri,wiec}. Method of constructing witness to detect entanglement has been further extended to detect useful resources for teleportation\cite{telwit,telwit2,telwitcom}.\\ 
\indent An analogous procedure to capture non-locality is through a Bell-CHSH witness \cite{bellwit}.For two qubits, a quantum states $\rho$ does not violate the Bell-CHSH inequality iff $M(\rho) \leq 1$, where $M(\rho)$ is defined as the sum of the two largest eigenvalues of the matrix $T_{\rho}^{t}T_{\rho}$, $T_{\rho}$ being the correlation matrix in the Hilbert-Schmidt representation of $\rho$. Thus, $M(\rho) > 1$ is a signature of the non-locality of the state\cite{horobell}.\\ 
\indent Pertaining to separability of quantum states, questions have been raised on the characterization of absolutely separable\cite{vers,johnston} and absolutely PPT states\cite{absppt}. Precisely, a quantum state which is entangled(respectively PPT) in some basis might not be entangled(resp. PPT) in some other basis. This depends on the factorizability of the underlying Hilbert space. Thus, the characterization of states which remain separable(resp. PPT) under any factorization of the basis is pertinent \cite{vers,johnston,absppt,witabs}. Some of the authors have addressed the problem of a state being absolutely Bell-CHSH local \cite{bellGU1,bellGU2}. A state is termed as absolutely Bell-CHSH local if it remains local with respect to the CHSH inequality , under any global unitary operation \cite{bellGU1,bellGU2}. The effect of global unitary operations on the conditional entropy of a two-qubit system has also been probed recently \cite{patro}.\\ 
\indent In the present work, we address the question of non-violation of a steering inequality under any global unitary operation. This is understood that a pure product state can be converted to a maximally entangled state by a suitable global unitary operation which thereby can violate the steering inequality. However, if purity of the initial state goes below a certain extent, the state cannot be made to violate the steering inequality even by a global unitary operation. The characterization of such states forms one of the main constituents of the present submission.With some specific steering inequalities we have characterized such states which retain their non-violating character under the action of global unitary. We further find a criterion to determine whether a particular state exhibits absolute non-violation. This is done by the derivation of a result that the maximal steering violations w.r.t the inequality we have considered is attained at the respective Bell diagonal state for a fixed spectrum. The size of such states are estimated with illustrations to support our observations.\\
\indent Starting with some prerequisites for our current work, we also observe some distinctive characteristics of a steerable state. While reviewing the basic concepts needed for our work we have derived some new results pertaining to the optimal violation of a steering inequality and the relation of steerability to teleportation.\\
\indent The paper is arranged as follows: In section(\ref{sec:pre}) we discuss some prerequisites needed for our study and also observe some distinctive traits of steerable states. In section(\ref{sec:charac}) we derive the conditions for a state to be absolutely non-violating(w.r.t our chosen inequality) and find an estimation of the size of such states. Illustrations are provided in section (\ref{sec:illus}).Finally we conclude in section (\ref{sec:concl}).
\section{Prerequisites, Few Definitions and Some new Results}\label{sec:pre}

\subsubsection{Notations and Definitions} \label{def}

At the outset we put together some notations and definitions to be followed in our analysis.$ \mathfrak{B}(X) $ denotes the set of bounded linear operators acting on X. The density matrices that 
we consider here, are operators acting on two qubits, i.e., 
$\rho \in \mathfrak{B}(H_{2} \otimes H_{2})$. $\mathbf{Q}$ denotes the set of all density matrices.\\
In \cite{caval} authors have developed a series of steering inequalities to check whether a bipartite state is steerable when both the parties are allowed to perform $\mathit{n}$  measurements on his or her part, which is given by the following equation
\begin{equation}\label{steering n}
F_n(\rho,\mu) =\frac{1}{\sqrt{n}} \Big|\sum_{i=1}^n\langle A_i\otimes B_i\rangle\Big| \leqslant 1,
\end{equation}

The inequalities for 
$\mathit{n}=2,3$ are as given below:
\begin{equation}\label{steering 2}
F_2(\rho,\mu) =\frac{1}{\sqrt{2}} \Big|\sum_{i=1}^2\langle A_i\otimes B_i\rangle\Big| \leqslant 1,
\end{equation}

\begin{equation}\label{steering 3}
F_3(\rho,\mu) =\frac{1}{\sqrt{3}} \Big|\sum_{i=1}^3\langle A_i\otimes B_i\rangle\Big| \leqslant 1,
\end{equation}
where $A_i = \hat{u}_i\cdot\vec{\s}$,  $B_i = \hat{v}_i\cdot\vec{\s}$, $\vec{\s}=(\s_1,\s_2,\s_3)$ is a vector composed of the Pauli matrices, $\hat{u}_i\in\mathbb{R}^3$ are unit vectors, $\hat{v}_i\in\mathbb{R}^3$ are orthonormal vectors, $\mu=\{\hat{u}_1,\cdots,\hat{u}_n,\hat{v}_1,\cdots,\hat{v}_n\}$ is the set of measurement directions, $\langle A_i\otimes B_i\rangle=\text{Tr}(\rho A_i\otimes B_i)$, and $\rho\in\mathcal{H}_A\otimes\mathcal{H}_B$ is some bipartite quantum state. Pertaining to inequality (\ref{steering n}), one may construct steerability witnesses $ S^W $ \cite{caval}. Precisely $ Tr(S^W \chi) <0 $ for at least one steerable state $ \chi $ and $ Tr(S^W\rho) \ge 0 $ for all unsteerable states $ \rho $.\\
  We denote by $ \mathbf{L} $ and $\mathbf{US}$, the set of all states which do not violate Bell-CHSH and steering inequality (\ref{steering n}) respectively.
We denote respectively by $\mathbf{AL}$,$\mathbf{AUS} $ as the sets containing states which do not violate the Bell-CHSH inequality and the steering inequality (\ref{steering n}) under any global unitary operation ($ U $). One can easily verify that $\mathbf{AUS}$ forms a non-empty subset of $\mathbf{US}$, as $\frac{1}{4}(I \otimes I) \in \mathbf{AUS}$. The set of states that do not violate inequality (\ref{steering 3}) under any global unitary operations will be denoted by $ \mathbf{AUS}_3 $.

\subsubsection{ $\mathbf{AUS}$ is convex and compact}\label{compact}

A formal characterization follows below,
\begin{theorem}  $\mathbf{US}$ is a convex and compact subset of  $\mathbf{Q}$.
\end{theorem}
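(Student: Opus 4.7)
The plan is to handle convexity and compactness separately, both via fairly standard convex-analysis arguments.

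For convexity, I would fix two states $\rho_1,\rho_2 \in \mathbf{US}$ and a scalar $\lambda\in[0,1]$, and consider $\rho=\lambda\rho_1+(1-\lambda)\rho_2$. For an arbitrary choice of measurement directions $\mu=\{\hat u_1,\dots,\hat u_n,\hat v_1,\dots,\hat v_n\}$, the linearity of the trace immediately gives
\begin{equation*}
\sum_{i=1}^n \text{Tr}(\rho\, A_i\otimes B_i) = \lambda \sum_{i=1}^n \text{Tr}(\rho_1 A_i\otimes B_i) + (1-\lambda)\sum_{i=1}^n \text{Tr}(\rho_2 A_i\otimes B_i).
\end{equation*}
Applying the triangle inequality inside the absolute value then yields $F_n(\rho,\mu)\le \lambda F_n(\rho_1,\mu)+(1-\lambda)F_n(\rho_2,\mu)\le 1$, where the last step uses that $\rho_1,\rho_2\in\mathbf{US}$. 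Since $\mu$ was arbitrary, $\rho\in\mathbf{US}$.

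For compactness, I would exploit the fact that $\mathbf{Q}$ sits inside the finite-dimensional real vector space of Hermitian operators on $H_2\otimes H_2$, so it suffices to show $\mathbf{US}$ is closed and bounded. Boundedness is immediate from the inclusion $\mathbf{US}\subseteq\mathbf{Q}$, since $\mathbf{Q}$ is bounded (in any operator norm) by the trace-one and positivity conditions. For closedness I would write
\begin{equation*}
\mathbf{US} \;=\; \mathbf{Q}\,\cap\bigcap_{\mu}\,\bigl\{\rho : F_n(\rho,\mu)\le 1\bigr\},
\end{equation*}
where the intersection runs over all admissible measurement tuples $\mu$. For each fixed $\mu$, the map $\rho\mapsto F_n(\rho,\mu)$ is a composition of a linear function of $\rho$ with the absolute value and a positive rescaling, hence continuous, so the set on the right-hand side is the preimage of the closed interval $[0,1]$ and is closed. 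An arbitrary intersection of closed sets is closed, and $\mathbf{Q}$ itself is closed, so $\mathbf{US}$ is closed. Combined with boundedness, this gives compactness.

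The argument is essentially routine; the only point that deserves care is the continuity step, where one should note that although the index set for $\mu$ is a product of spheres (hence continuous), closedness only requires that each constituent set be closed — no uniformity in $\mu$ is needed because closedness is preserved under arbitrary intersections. No deep topological input beyond finite-dimensional Heine--Borel is required.
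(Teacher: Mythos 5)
Your proposal is correct and follows essentially the same route as the paper: both arguments realise $\mathbf{US}$ as an intersection, over a family of continuous affine functionals, of preimages of closed intervals (the paper indexes this family by steering witnesses $S^{W}$, you index it directly by the measurement tuples $\mu$), and both get convexity from linearity of the trace. The only cosmetic difference is that you close the compactness argument via Heine--Borel (closed and bounded in finite dimensions) whereas the paper cites closedness inside the compact set $\mathbf{Q}$; these are interchangeable here.
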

\begin{proof}
	First note that the statements below are equivalent:\\
	(i) $\rho \in \mathbf{US}$\\
	(ii) $ \forall $  steering operator $ S , Tr(S\rho) \leq 1$\\
	(iii) $ \forall $  steering witness $ S^{W}, Tr(S^{W}\rho) \geq 0 $\\
	In view of the above, we can rewrite $ \mathbf{US} $ as, $ \mathbf{US} = \lbrace \rho: Tr(S^{W}\rho) \geq 0, \forall S^{W}   \rbrace $. Now consider a function $ f_{1}: \mathbf{Q} \rightarrow \mathbb{R} $, defined as
	\begin{equation}
	f_{1}(\chi)=Tr(S^{W_{1}}\chi)
	\end{equation}
	where, $S^{W_{1}}$ is a fixed steering witness. Let $ US_{1}= \lbrace \chi_{1}:Tr(S^{W_{1}}\chi_{1}) \geq 0 \rbrace $.  $ Tr(S^{W_{1}}\chi_{1})  $ will have a maximum value $ d_{1}$ (say). Therefore, one may write $ US_{1} = f_{1}^{-1}[0,d_{1}] $. $ f_{1} $ is a continuous function as $ Tr $ is a continuous function. This in turn implies $ US_{1} $ is a closed set. Continuing as above, one may define $ US_{i} $ for a fixed $S^{W_{i}}  $. $ US_{i} $ will be closed $ \forall  i $. Since, arbitrary intersection of closed sets is closed, $ \bigcap_{i}US{i} $ is closed. It is easy to see that $ \bigcap_{i}US{i} = \mathbf{US} $ . Hence, $ \mathbf{US} $ is closed. If we now take two arbitrary $ \rho_{1},\rho_{2} \in \mathbf{US}$ , then $ Tr[S^{W}(\lambda \rho_{1}+(1-\lambda)\rho_{2})] \geq 0$ for any $ S^{W}  $, $ \lambda \in [0,1] $. This follows from the fact that $ Tr[S^{W}\rho_{i}] \geq 0 , i=\lbrace 1,2 \rbrace$ for any $ S^{W} $.Thus $ \mathbf{US} $ is convex. Since $ \mathbf{Q} $ is compact, $ \mathbf{US} $ being a closed subset of $ \mathbf{Q} $, is thus  compact. Hence the theorem.
\end{proof}
This theorem facilitates the characterization of the set $ \mathbf{AUS} $ as stated in the theorem below:\\
\begin{theorem} $\mathbf{AUS}$ is a convex and compact subset of $\mathbf{US}$.
\end{theorem}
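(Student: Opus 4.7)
My plan is to piggyback on the previous theorem and exploit the obvious equivalent description
\[
\mathbf{AUS}=\bigcap_{U} U^{\dagger}\,\mathbf{US}\,U,
\]
where the intersection is taken over all unitaries $U$ acting on $H_2\otimes H_2$. The inclusion $\mathbf{AUS}\subseteq\mathbf{US}$ is then immediate by specializing to $U=\id$, so the content of the theorem is really the convexity and compactness claims.

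For convexity, I would pick arbitrary $\rho_1,\rho_2\in\mathbf{AUS}$ and $\lambda\in[0,1]$, fix an arbitrary unitary $U$, and use linearity of conjugation to write
\[
U\bigl(\lambda\rho_1+(1-\lambda)\rho_2\bigr)U^{\dagger}=\lambda\,U\rho_1U^{\dagger}+(1-\lambda)\,U\rho_2U^{\dagger}.
\]
Each summand lies in $\mathbf{US}$ by hypothesis, and since $\mathbf{US}$ is convex (previous theorem), the right-hand side is in $\mathbf{US}$. As $U$ was arbitrary, the convex combination is in $\mathbf{AUS}$.

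For compactness, since $\mathbf{AUS}\subseteq\mathbf{Q}$ and $\mathbf{Q}$ is compact, it suffices to show that $\mathbf{AUS}$ is closed. For each unitary $U$ the map $g_U:\mathbf{Q}\to\mathbf{Q}$ defined by $g_U(\rho)=U\rho U^{\dagger}$ is continuous (it is an isometry in the Hilbert--Schmidt norm), and $\mathbf{US}$ is closed by the previous theorem. Hence each preimage $g_U^{-1}(\mathbf{US})$ is closed, and $\mathbf{AUS}=\bigcap_{U}g_U^{-1}(\mathbf{US})$ is closed as an arbitrary intersection of closed sets. An equivalent sequential argument also works: if $\rho_n\to\rho$ with $\rho_n\in\mathbf{AUS}$, then for every $U$ one has $U\rho_nU^{\dagger}\to U\rho U^{\dagger}$, and the limit lies in the closed set $\mathbf{US}$.

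There is no serious obstacle here; the entire statement reduces, almost formally, to convexity/closedness of $\mathbf{US}$ combined with the linearity and continuity of unitary conjugation. The only subtlety worth flagging is that one should not attempt to prove compactness directly from a steering-witness characterization, but rather inherit it from the ambient compactness of $\mathbf{Q}$ via closedness, since the intersection over $U$ is uncountable.
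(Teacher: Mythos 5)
Your proof is correct and follows essentially the same route as the paper: convexity is obtained exactly as in the paper's argument, by noting that conjugation by a fixed $U$ is linear and that each conjugated summand lies in the convex set $\mathbf{US}$. For compactness the paper merely defers to a citation, whereas you supply the full argument --- writing $\mathbf{AUS}=\bigcap_{U}g_U^{-1}(\mathbf{US})$ with $g_U(\rho)=U\rho U^{\dagger}$ continuous and $\mathbf{US}$ closed, so that $\mathbf{AUS}$ is a closed subset of the compact set $\mathbf{Q}$ --- which is exactly the intended reasoning and is carried out correctly.
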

\begin{proof}
	We only show that $ \mathbf{AUS} $ is convex as the compactness follows from a retrace of the steps presented in \cite{witabs}. \\
	Take two arbitrary $ \sigma_{1}, \sigma_{2} \in \mathbf{AUS} $. One may rewrite $ \mathbf{AUS}= \lbrace \sigma: Tr[S^{W}(U \sigma U^{\dagger})] \geq 0, \forall S^{W} , \forall U \rbrace $. Therefore, for any $ U $, $ U[\lambda \sigma_{1} + (1-\lambda) \sigma_{2}]U^{\dagger} = \lambda \sigma_{1}^{\prime} + (1-\lambda)\sigma_{2}^{\prime} \in \mathbf{AUS} $. This follows, since $ \mathbf{US} $ is convex. [$ \sigma_{i}^{\prime}=U \sigma_{i} U^{\dagger} $].\\
	Hence, the theorem.
\end{proof}
The above  characterization enables one to formally define an operator($ W^{S} $) which detects states that violate steering inequality under global unitary. 
\begin{eqnarray}
Tr(W^{S}\sigma) \geq 0 , \forall \sigma \in \mathbf{AUS}\label{ineq1}\\
\exists \varsigma \in \mathbf{US}-\mathbf{AUS} , Tr(W^{s}\varsigma) < 0 \label{ineq2}
\end{eqnarray}
Consider $\varsigma \in \mathbf{US}-\mathbf{AUS}$. There exists a unitary operator $U_{e}$ such that $U_{e}\varsigma U_{e}^{\dagger}$ violates steering inequality. Consider a 
steering witness $S^W$ that detects $U_{e}\varsigma U_{e}^{\dagger}$, i.e., $Tr(S^W U_{e}\varsigma U_{e}^{\dagger}) < 0$. Using the cyclic property of the trace, one  obtains $Tr(U_{e}^{\dagger}S^W U_{e}\varsigma ) < 0$. We thus claim that  
\begin{equation}
W^S=U_{e}^{\dagger}S^W U_{e}
\label{witop}
\end{equation}
is our desired operator. To see that it satisfies inequality (\ref{steering n}), we consider its action on a state $\sigma$ from $\mathbf{AUS}$. 
We have $Tr(W^S \sigma)=Tr(U_{e}^{\dagger}S^W U_{e} \sigma)=Tr(S^W U_{e} \sigma U_{e}^{\dagger} )$. As $\sigma $ $ \in  \mathbf{AUS} $,and $ S^W $ is a steering witness $ Tr(S^W U_{e} \sigma U_{e}^{\dagger} ) \geq 0$. This implies that $W$ has a non-negative expectation value on all  states $\sigma \in \mathbf{AUS}$. \\

\subsubsection{Steerability and Teleportation}

In Hilbert-Schmidt representation any density matrix living in two qubits can be expressed as ,
\begin{equation}
\zeta = \frac{1}{4}(I \otimes I + \overrightarrow{a}.\overrightarrow{s} . \otimes I + I \otimes \overrightarrow{b}.\overrightarrow{s} + \underset{i,j}{\sum} t_{ij} s_i \otimes s_j)
\label{bloch}
\end{equation}
$ \overrightarrow{a} , \overrightarrow{b} $, being the local bloch vectors and $ T=[t_{ij}] $ is the correlation matrix.\\
 In \cite{horodecki1996}, a necessary and sufficient condition for a state to be useful for teleportation was derived in terms of bloch parameters. Namely, a state $ \zeta $ is useful for teleportation iff \cite{horodecki1996},
 \begin{equation}
 N(\zeta)=Tr \sqrt{T^\dagger T}= \sum\limits_{i=1}^{3} \sqrt{u_i} > 1
 \end{equation}
 where $ u_i $ are the eigenvalues of $ T^\dagger T $ .\\
 In \cite{angelo2016} the authors derived a steerability measure under 3 measurement settings for density matrices of the form ,
\begin{equation}
\zeta^\prime = \frac{1}{4}(I \otimes I + \overrightarrow{a^\prime}.\overrightarrow{s} . \otimes I + I \otimes \overrightarrow{b^\prime}.\overrightarrow{s} + \underset{i}{\sum} c_i s_i \otimes s_i)\label{diagonalT}
\end{equation}
where the correlation matrix of $ \zeta^\prime $ is $ T^\prime = diag(c_1,c_2,c_3) $. The steerability measure was given by by $ F_3(\zeta^\prime) = \sqrt{Tr(T^{\prime^\dagger}T^\prime)} $(subscript 3 denotes 3 measurement settings). Since the density matrices given in equations (\ref{bloch}) and (\ref{diagonalT}) are local unitary equivalent, we have,
\begin{equation}
F_3(\zeta^\prime) = \sqrt{Tr(T^{\prime^\dagger}T^\prime)}= \sqrt{Tr(T^\dagger T)} = F_3(\zeta)
\end{equation}
Now, $ F_3(\zeta) = \sqrt{u_1 + u_2 + u_3} = \sqrt{ N(\zeta)^2-2 \sum \sqrt{u_i} \sqrt{u_j}}  \le N(\zeta) $,\\
$\sum \sqrt{u_i} \sqrt{u_j}$ being positive. \\
Therefore any quantum state which is 3-steerable i.e. $ F_3(\zeta) > 1 $, is useful for teleportation.

\subsubsection{From joint measurability to steering inequality}

Except from quantum entanglement, another necessary ingredient which is necessary for study of quantum nonlocality is the existence of incompatible set of measurements. In the simplest bipartite scenario Wolf \emph{et al.} have shown that any set of two incompatible POVMs with binary outcomes can always lead to violation of the CHSH-Bell inequality \cite{Wolf2009}. But, recently in refs.\cite{bru,ghu} the authors have proved that this result does not hold in the general scenario where numbers of POVMs and outcomes are arbitrary. However in this general settings the authors of \cite{bru,ghu} have established a connection between measurement incompatibility and a weaker form of quantum nonlocality i.e., EPR-Schr\"{o}dinger steering. They have shown that for any set of  incompatible POVMs (i.e. not jointly measurable), one can find an entangled state, such that the resulting statistics violate a steering inequality. Note that, it has been recently proved that the connection between measurement incomparability and steering holds for a more general class of tensor product theories rather than just Hilbert space quantum theory \cite{Manik'2015}. 

Let Alice perform a measurement assemblage $\{A_{a|x}\}$ on her part of a bipartite shared quantum state $\rho_{AB}$. Upon performing measurement $x$, and obtaining outcome $a$, the (un-normalized) state held by Bob is given by $\sigma_{a|x}=\mbox{Tr}(A_{a|x}\otimes\mathbf{1}\rho_{AB})$. The normalized state on Bob's side is given by $\sigma_{a|x}/\mbox{Tr}(\sigma_{a|x})$. Also we have $\sum_a\sigma_{a|x}=\sum_a\sigma_{a|x'}~\mbox{for}~x\ne x'$, which actually ensure no signaling from Alics to Bob. The state assemblage $\{\sigma_{a|x}\}$ is unsteerable iff it admits a decomposition of the form
\begin{equation}
\sigma_{a|x}=\pi(\lambda)p(a|x,\lambda)\sigma_{\lambda},~~\forall~a,x, \label{steer}
\end{equation} 
where $\sum_{\lambda}\pi(\lambda)=1$. Existence of such decomposition for state assemblage on Bob's side ensures that the statistics obtained from the state $\rho_{AB}$ admit a combined LHV-LHS model of the form of Eq.(\ref{steer}). The authors in refs.\cite{bru,ghu} have shown that the assemblage $\{\sigma_{a|x}\}$, with $\sigma_{a|x}=\mbox{Tr}(A_{a|x}\otimes\mathbf{1}\rho_{AB})$, is unsteerable for any state $\rho_{AB}$ acting on $\mathbb{C}^d\otimes\mathbb{C}^d$ if and only if the set of POVMs $\{A_{a|x}\}$ acting on $\mathbb{C}^d$ are jointly measurable. As a result we can say that

\begin{lemma}\label{assemblage}
	The assemblage $\{ \sigma_{a|x} \}$, with $\sigma_{a|x} = \tr_A(A_{a|x} \otimes \mathbf{1} \rho_{AB} )$ is unsteerable for any state $\rho_{AB}$ acting in $ \mathbb{C}^d \otimes \mathbb{C}^d$ if and only if the set of POVMs $\{ A_{a|x} \}$ acting on $\mathbb{C}^d$ are jointly measurable.
\end{lemma}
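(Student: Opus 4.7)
The plan is to prove both directions of the equivalence by exploiting the linearity of the partial trace together with a canonical choice of state that faithfully encodes the POVM structure. For the forward (easy) direction, I would assume that $\{A_{a|x}\}$ is jointly measurable, meaning there is a parent POVM $\{G_\lambda\}$ on $\mathbb{C}^d$ and conditional distributions $p(a|x,\lambda)$ with $A_{a|x}=\sum_\lambda p(a|x,\lambda)\,G_\lambda$. Substituting into the definition of the assemblage and pulling the sum outside the partial trace,
\begin{equation}
\sigma_{a|x}=\mathrm{Tr}_A\!\left(\sum_\lambda p(a|x,\lambda)\,G_\lambda\otimes\mathbf{1}\,\rho_{AB}\right)=\sum_\lambda p(a|x,\lambda)\,\tau_\lambda,
\end{equation}
where $\tau_\lambda=\mathrm{Tr}_A(G_\lambda\otimes\mathbf{1}\,\rho_{AB})$. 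Setting $\pi(\lambda)=\mathrm{Tr}(\tau_\lambda)$ and $\sigma_\lambda=\tau_\lambda/\pi(\lambda)$ (dropping the zero-weight terms) yields precisely the LHS decomposition of Eq.(\ref{steer}), so the assemblage is unsteerable irrespective of the choice of $\rho_{AB}$.

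For the converse, the strategy is to exhibit a single $\rho_{AB}$ whose assemblage \emph{reconstructs} the POVM elements, so that any LHS decomposition is forced to reveal a joint measurement. The natural choice is the maximally entangled state $|\phi^+\rangle=\frac{1}{\sqrt d}\sum_i|ii\rangle$ on $\mathbb{C}^d\otimes\mathbb{C}^d$, for which the standard identity $\mathrm{Tr}_A\!\left(M\otimes\mathbf{1}\,|\phi^+\rangle\langle\phi^+|\right)=\tfrac{1}{d}M^T$ gives $\sigma_{a|x}=\tfrac{1}{d}A_{a|x}^T$. By hypothesis, this assemblage admits an LHS model $\sigma_{a|x}=\sum_\lambda\pi(\lambda)p(a|x,\lambda)\sigma_\lambda$. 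Taking transposes and absorbing constants, I would define $G_\lambda:=d\,\pi(\lambda)\,\sigma_\lambda^T$. These are positive operators, and summing over $a$ in the relation $A_{a|x}=\sum_\lambda p(a|x,\lambda)G_\lambda$ together with $\sum_a A_{a|x}=\mathbf{1}$ for each $x$ will force $\sum_\lambda G_\lambda=\mathbf{1}$, using the no-signalling consistency $\sum_a\sigma_{a|x}=\sum_a\sigma_{a|x'}=\rho_B^T=\mathbf{1}/d$ that holds because the reduced state of $|\phi^+\rangle$ is maximally mixed. Hence $\{G_\lambda\}$ is a bona fide POVM and $\{A_{a|x}\}$ is jointly measurable with parent $\{G_\lambda\}$ and conditionals $p(a|x,\lambda)$.

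The principal obstacle in executing this plan is the converse direction: one must verify that the extracted $G_\lambda$ genuinely satisfy $\sum_\lambda G_\lambda=\mathbf{1}$ and that the reconstruction $A_{a|x}=\sum_\lambda p(a|x,\lambda)G_\lambda$ actually holds as operator equalities (not merely modulo transposition conventions), which requires a careful bookkeeping of the Choi--Jamio\l{}kowski-type isomorphism implicit in the maximally entangled state identity. Once this is handled, the remaining steps are purely algebraic. I would conclude by noting that the result is already available in the literature (cited as \cite{bru,ghu}) and therefore a clean exposition of the above two-step argument suffices.
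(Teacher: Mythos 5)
Your proposal is correct, and in fact supplies more than the paper does: the paper offers no proof of this lemma at all, simply importing it as a result of Quintino--V\'ertesi--Brunner and Uola--Moroder--G\"uhne (refs.~\cite{bru,ghu}). Your two-step argument --- the ``if'' direction by pulling the parent POVM through the partial trace to get the LHS decomposition of Eq.~(\ref{steer}), and the ``only if'' direction by evaluating the assemblage on the maximally entangled state so that $\sigma_{a|x}=\tfrac{1}{d}A_{a|x}^{T}$ and reading off the parent POVM $G_\lambda = d\,\pi(\lambda)\sigma_\lambda^{T}$ (with $\sum_\lambda G_\lambda=\mathbf{1}$ following from $\sum_a p(a|x,\lambda)=1$ and $\sum_\lambda\pi(\lambda)\sigma_\lambda=\mathbf{1}/d$) --- is precisely the standard proof given in those references, and the transposition bookkeeping you flag is the only point requiring care.
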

We are now in a position to present the result, which is described in the following theorem.
\begin{theorem}
	 Consider a composite quantum system composed of two subsystem with state spaces $\mathcal{H}_1$ and $\mathcal{H}_2$, respectively. In two qubits ,for any trio of dichotomic observables $A_1$, $A_2,A_3$ for the first system and three dichotomic observables $B_1$, $B_2,B_3$ for the second system and the joint state $\rho_{AB}$ acting on $\mathcal{H}_1\otimes\mathcal{H}_2$, we have the following inequality:
	\begin{equation}
	F_3 \le \frac{1}{\eta_{opt}},
	\end{equation}
	where $\eta_{opt}$ is the optimal unsharpness parameter that allows joint measurement for any three dichotomic quantum observables.($ F_3 $ refers to the violation of inequality (\ref{steering 3}))
\end{theorem}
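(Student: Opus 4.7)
The plan is to exploit the linearity of the steering functional $F_3$ in Alice's observables together with Lemma~\ref{assemblage}, by passing to the unsharp (smeared) version of $A_1,A_2,A_3$. Recall that a dichotomic observable $A_i$ corresponds to the projective POVM $A_{\pm|i}=(I\pm A_i)/2$, and its $\eta$-unsharp version is the POVM $\widetilde{A}_{\pm|i}=\eta A_{\pm|i}+(1-\eta)I/2=(I\pm\eta A_i)/2$, so that the associated $\pm1$-valued observable is just $\widetilde{A}_i=\eta A_i$.

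First I would compute the steering functional when Alice replaces her sharp observables by their $\eta$-smeared versions. By linearity of the trace,
\begin{equation}
\widetilde{F}_3(\rho_{AB},\mu)=\frac{1}{\sqrt{3}}\Big|\sum_{i=1}^{3}\langle\widetilde{A}_i\otimes B_i\rangle\Big|=\eta\,F_3(\rho_{AB},\mu).
\end{equation}
Thus the problem reduces to bounding $\widetilde{F}_3$ for a judicious choice of $\eta$.

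Next I would invoke the definition of $\eta_{opt}$: it is the largest value of the unsharpness parameter such that, for any trio of dichotomic observables on a qubit, the $\eta_{opt}$-smeared POVMs $\{\widetilde{A}_{\pm|i}\}_{i=1}^{3}$ are jointly measurable. By Lemma~\ref{assemblage}, joint measurability of $\{\widetilde{A}_{a|i}\}$ implies that for every bipartite state $\rho_{AB}$ on $\mathbb{C}^2\otimes\mathbb{C}^2$ the induced assemblage $\{\widetilde{\sigma}_{a|i}=\mathrm{Tr}_A(\widetilde{A}_{a|i}\otimes\mathbf{1}\,\rho_{AB})\}$ is unsteerable. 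Since inequality~(\ref{steering 3}) is a valid steering inequality, an unsteerable assemblage cannot violate it regardless of which orthonormal trio $\{\hat v_i\}$ Bob chooses. Hence $\widetilde{F}_3(\rho_{AB},\mu)\leq 1$ for every $\rho_{AB}$ and every measurement set $\mu$ once $\eta=\eta_{opt}$ is used on Alice's side. Combining with the identity above yields $\eta_{opt}\,F_3(\rho_{AB},\mu)\leq 1$, i.e.\ $F_3\leq 1/\eta_{opt}$, which is the claim.

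The main obstacle is the conceptual step linking unsteerability of the smeared assemblage to the inequality $\widetilde{F}_3\leq 1$: one must ensure that the bound~(\ref{steering 3}) is genuinely a consequence of the LHS model, so that any assemblage admitting a decomposition of the form~(\ref{steer}) automatically satisfies it; this is exactly the content of the construction of~\cite{caval} that underlies the steering witnesses $S^{W}$, so the route is sound. A minor technical caveat I would address is that Lemma~\ref{assemblage} is stated for $\mathbb{C}^d\otimes\mathbb{C}^d$ and I need it for $d=2$, which is the regime of the theorem; and that the smearing procedure $A_i\mapsto\widetilde{A}_i=\eta A_i$ preserves Alice's choice of measurement directions so that the linear rescaling of $F_3$ by $\eta$ is exact. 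No tight optimization over $\mu$ is required for the proof: because the inequality holds for all $\mu$, it holds for the supremum defining the violation $F_3$.
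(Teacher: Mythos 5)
Your proposal is correct and follows essentially the same route as the paper: smear Alice's three dichotomic observables by the optimal unsharpness $\eta_{opt}$ so that they become jointly measurable, invoke Lemma~\ref{assemblage} to conclude the resulting assemblage is unsteerable and hence satisfies inequality~(\ref{steering 3}), and then use the linear rescaling $\langle A_i^{(\eta)}\otimes B_i\rangle=\eta\langle A_i\otimes B_i\rangle$ to pull out the factor $\eta_{opt}$ and obtain $F_3\le 1/\eta_{opt}$. Your write-up is in fact slightly more careful than the paper's (which inadvertently says ``two'' observables where it means three, and does not spell out the POVM smearing explicitly), but the argument is identical in substance.
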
	

\begin{proof}

 Let us consider two arbitrary dichotomic observables $\{A_{a|x}\}$ on Alice's side, $x\in\{1,2,3\}$ and $a\in\{-1,+1\}$. These two observables in general may not allow joint measurement. However, introduction of unsharpness makes it possible to measure the unsharp versions of these two observables jointly. Let the optimal unsharpness be $\eta_{opt}$ which allows joint measurement for any two dichotomic observables.  
	
	Now according to lemma (\ref{assemblage}), as far as observables on Alice's side are jointly measurable, they will not violate any steering inequality and hence the steering inequality
	
	\begin{equation}
	F_3(\rho,\mu) =\frac{1}{\sqrt{3}} \Big|\sum_{i=1}^3\langle A_i^\eta\otimes B_i\rangle\Big| \leqslant 1,
	\end{equation}
	
	\begin{equation}
	F_3(\rho,\mu) =\frac{1}{\sqrt{3}} \Big|\sum_{i=1}^3 \eta \langle A_i \otimes B_i\rangle\Big| \leqslant 1,
	\end{equation}
	as, \begin{equation}
	\langle A^{(\eta)}_kB_j\rangle_{\rho_{AB}}=\eta\langle A_kB_j\rangle_{\rho_{AB}}.
	\end{equation} 
	The value of $\eta_{opt}$ in quantum theory is proved to be $1/\sqrt{3}$. Therefore the upper bound of the steering inequality (\ref{steering 3}) in quantum theory is $\sqrt{3}$.
	
\end{proof}

\section{Characterization in terms of spectrum}\label{sec:charac}
In this section we will derive the condition for absolute non-violation with respect to the inequality (\ref{steering 3}), where $ F_3 > 1 $ implies that the state is steerable in the 3 measurement scenario. We do not consider inequality (\ref{steering 2}) as under two measurement settings it is equivalent to the Bell-CHSH inequality \cite{caval16}, regarding which we have already derived results in \cite{bellGU1,bellGU2}.\\
Before proceeding with the final derivation, we prove the following two lemmas which are analogous to the proofs used in \cite{Wolf2002} with respect to the Bell operator. Pertaining to inequality (\ref{steering n}), we denote $ \mathfrak{S} = \frac{1}{\sqrt{n}} \Big|\sum_{i=1}^n\langle A_i\otimes B_i\rangle\Big| $ and by $ F_n $ the corresponding violation.\\
\begin{lemma}
$\mathfrak{S}$ is diagonal in the Bell-basis
\end{lemma}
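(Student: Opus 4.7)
The approach is to expand the operator $\hat{\mathfrak{S}} = \frac{1}{\sqrt{n}}\sum_{i=1}^{n} A_i \otimes B_i$ (of which $\mathfrak{S}$ is the expectation value) in the two-qubit Pauli basis, identify its correlation matrix, and exploit local-unitary freedom through the $SU(2)\to SO(3)$ homomorphism to reduce it to a manifestly Bell-diagonal form, in parallel with the treatment of the CHSH Bell operator in Wolf (2002).

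First I would substitute $A_i = \hat{u}_i\cdot\vec{\s}$ and $B_i = \hat{v}_i\cdot\vec{\s}$ to write
\[
\hat{\mathfrak{S}} \;=\; \frac{1}{\sqrt{n}} \sum_{j,k=1}^{3} T_{jk}\,\s_j \otimes \s_k, \qquad T_{jk} \;=\; \sum_{i=1}^{n} (\hat{u}_i)_j\,(\hat{v}_i)_k.
\]
The operator has no $\s_j \otimes I$ or $I \otimes \s_k$ components and is therefore fully encoded by the real $3\times 3$ matrix $T$, whose structure is constrained by the fact that $\{\hat{v}_i\}$ are orthonormal and $\{\hat{u}_i\}$ are unit vectors.

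Next I would apply the singular value decomposition $T = O_1\,D\,O_2^{\,T}$ with $D = \mathrm{diag}(d_1,d_2,d_3)$ and orthogonal $O_1,O_2$, absorbing signs of singular values in pairs so that $O_1,O_2 \in SO(3)$. Using the standard $SU(2)\to SO(3)$ homomorphism, I lift these to single-qubit unitaries $W_A, W_B$ acting on the Pauli vector as $W_\alpha^\dagger\, \s_j\, W_\alpha = \sum_l (O_\alpha)_{jl}\,\s_l$. Conjugation $(W_A \otimes W_B)^\dagger \hat{\mathfrak{S}} (W_A \otimes W_B)$ then absorbs $O_1, O_2$ into the Pauli indices and collapses the double sum to
\[
\frac{1}{\sqrt{n}}\sum_{k=1}^{3} d_k\,\s_k \otimes \s_k,
\]
a linear combination of the three mutually commuting ``diagonal correlator'' operators.

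Finally, a direct computation confirms that each $\s_k \otimes \s_k$ is diagonal in the canonical Bell basis $\{|\Phi^\pm\rangle,|\Psi^\pm\rangle\}$ with eigenvalues $\pm 1$, so the combination $\sum_k d_k\,\s_k\otimes\s_k$ is Bell-diagonal. Since unitary conjugation preserves spectra, $\hat{\mathfrak{S}}$ itself is -- up to local unitary equivalence -- diagonal in the Bell basis, with eigenvalues built from the singular values $d_k$ of $T$. The main obstacle I anticipate is interpretive rather than computational: the statement must be read modulo a choice of local bases on the two qubits, since a generic $\hat{\mathfrak{S}}$ with non-diagonal $T$ is not literally diagonal in the fixed canonical Bell basis. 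A secondary technical point is ensuring $\det O_1 = \det O_2 = +1$ so the SVD rotations lift to genuine $SU(2)$ elements, which is resolved by pairing sign flips in the singular values.
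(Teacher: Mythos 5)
Your proof is correct and follows essentially the same route as the paper: the paper's entire proof consists of checking that the Hilbert--Schmidt overlaps of the steering operator with $s_j\otimes I$ and $I\otimes s_j$ vanish, which is exactly your observation that $\hat{\mathfrak{S}}$ has no local Pauli components, with the subsequent singular-value-decomposition/local-rotation step left implicit (it is borrowed from the Verstraete--Wolf treatment of the Bell operator). You supply that step explicitly, together with the correct caveat that ``diagonal in the Bell basis'' must be read modulo local unitaries, so your version is, if anything, more complete than the paper's.
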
 
\begin{proof}
Let $ X = s_j \otimes I , (j=1,2,3)$.We find $ \langle A_i\otimes B_i\rangle_{X_j} =0$. Similarly letting $ Y = I \otimes s_j $, we observe $ \langle A_i\otimes B_i\rangle_{Y_j} =0$.
\end{proof}
We now use Theorem 4 from \cite{Wolf2002} in the steering scenario. 
\begin{lemma}
For any given spectrum of the density matrix, the respective Bell diagonal state  $ \zeta_B $ maximizes the steering violation. In other words $ F_{n}(\zeta_{B}) \ge F_{n}(U \zeta_{B} U^\dagger) $.   
\end{lemma}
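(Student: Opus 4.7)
The plan is to reduce $F_{3}(\rho)$ to the Frobenius norm of the correlation matrix and then use a purity identity to identify the Bell-diagonal state as the unique maximiser of that norm over all states with a given spectrum.

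First, by the preceding lemma the steering operator has only $s_{i}\otimes s_{j}$ components in its Pauli expansion, so
\begin{equation}
\hat S(\mu)=\tfrac{1}{\sqrt{n}}\sum_{i,j=1}^{3}T_{ij}(\mu)\,s_{i}\otimes s_{j},\qquad T_{ij}(\mu)=\sum_{k=1}^{n}(\hat u_{k})_{i}(\hat v_{k})_{j}.
\end{equation}
Consequently $\mathrm{Tr}\bigl(\hat S(\mu)\rho\bigr)=\tfrac{1}{\sqrt{n}}\sum_{k=1}^{n}\hat u_{k}^{\top}T_{\rho}\,\hat v_{k}$, where $T_{\rho}=[t_{ij}]$ is the correlation matrix of $\rho$ from Eq.~(\ref{bloch}). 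In particular $F_{n}(\rho)$ depends on $\rho$ only through $T_{\rho}$.

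Second, I maximise over $\mu$. Fixing the orthonormal set $\{\hat v_{k}\}$ and choosing $\hat u_{k}=T_{\rho}\hat v_{k}/\|T_{\rho}\hat v_{k}\|$ saturates the $\hat u_{k}$-optimisation, leaving $F_{n}(\rho)=\tfrac{1}{\sqrt{n}}\max_{\{\hat v_{k}\}}\sum_{k=1}^{n}\|T_{\rho}\hat v_{k}\|$ over orthonormal $\{\hat v_{k}\}$. For $n=3$ the quantity $\sum_{k=1}^{3}\|T_{\rho}\hat v_{k}\|^{2}=\mathrm{Tr}(T_{\rho}^{\top}T_{\rho})=\|T_{\rho}\|_{F}^{2}$ is basis-independent, and Cauchy--Schwarz gives $\sum_{k}\|T_{\rho}\hat v_{k}\|\le\sqrt{3}\,\|T_{\rho}\|_{F}$, with equality when the three lengths are equal. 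The existence of an isotropic orthonormal basis for the positive-semidefinite form $T_{\rho}^{\top}T_{\rho}$ on $\mathbb{R}^{3}$, constructed by an intermediate-value argument on the unit sphere, makes this equality attainable, so
\begin{equation}
F_{3}(\rho)=\|T_{\rho}\|_{F}=\sqrt{\mathrm{Tr}(T_{\rho}^{\top}T_{\rho})}.
\end{equation}

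Third, I show that $\|T_{\rho}\|_{F}$ is maximised over states with a fixed spectrum by the Bell-diagonal one. Squaring Eq.~(\ref{bloch}) and tracing yields the purity identity
\begin{equation}
\mathrm{Tr}(\rho^{2})=\tfrac{1}{4}\bigl(1+|\vec a|^{2}+|\vec b|^{2}+\|T_{\rho}\|_{F}^{2}\bigr),
\end{equation}
in which the left-hand side $\sum_{i}\lambda_{i}^{2}$ is invariant under $\rho\mapsto U\rho U^{\dagger}$. Hence $\|T_{\rho}\|_{F}^{2}\le 4\,\mathrm{Tr}(\rho^{2})-1$ with equality iff $\vec a(\rho)=\vec b(\rho)=0$, i.e.\ iff $\rho$ is Bell-diagonal. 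Combined with step two this yields $F_{3}(\zeta_{B})\ge F_{3}(U\zeta_{B}U^{\dagger})$ as claimed.

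The main obstacle is step two: one must argue that the Cauchy--Schwarz upper bound is actually attained, which requires the existence of an orthonormal basis of $\mathbb{R}^{3}$ on which the quadratic form $\hat v\mapsto\|T_{\rho}\hat v\|^{2}$ takes a constant value. This isotropic splitting is specific to dimension three; for $n=2$ the analogous bound is not generally tight, and one would instead have $F_{2}(\rho)=\sqrt{\sigma_{1}^{2}+\sigma_{2}^{2}}$ in terms of the top two singular values of $T_{\rho}$, with the step-three purity argument needing a majorisation refinement in the spirit of Theorem~4 of \cite{Wolf2002}.
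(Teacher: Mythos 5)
Your argument is correct for the case the paper actually uses ($n=3$), but it proceeds by a genuinely different route. The paper's own proof is a one-line deferral: it asserts that the steps of Theorem 4 of Verstraete--Wolf \cite{Wolf2002} (the majorisation argument showing Bell-diagonal states optimise the CHSH operator over a unitary orbit) carry over verbatim to the steering operator $\mathfrak{S}$, using the preceding lemma that $\mathfrak{S}$ is diagonal in the Bell basis. You instead give a self-contained argument: reduce $F_3$ to $\|T_\rho\|_F$ by optimising the measurement directions (re-deriving the Costa--Angelo formula the paper imports from \cite{angelo2016}), then invoke the purity identity $\mathrm{Tr}(\rho^2)=\tfrac14\bigl(1+|\vec a|^2+|\vec b|^2+\|T_\rho\|_F^2\bigr)$ together with unitary invariance of purity to get $\|T_\rho\|_F^2\le 4\,\mathrm{Tr}(\rho^2)-1$, saturated exactly when the local Bloch vectors vanish. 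This buys you more than the paper's proof does: it is elementary, it avoids checking that the Verstraete--Wolf convexity/majorisation steps really transfer, and it delivers the explicit maximal value $F_3^2=4\sum_i x_i^2-1=3\sum_i x_i^2-2\sum_{i<j}x_ix_j$ that the paper needs (without comment) in its Theorem on $\mathbf{AUS}_3$. Two small caveats. First, equality in your step three characterises states with $\vec a=\vec b=0$, which are the local-unitary orbit of Bell-diagonal states rather than Bell-diagonal states alone; since $\|T\|_F$ is local-unitary invariant and the Bell-diagonal representative of any spectrum has vanishing Bloch vectors, the conclusion $F_3(\zeta_B)\ge F_3(U\zeta_B U^\dagger)$ is unaffected, but the ``iff Bell-diagonal'' phrasing should be softened. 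Second, the lemma as stated in the paper is for general $n$, and your purity argument only closes the $n=3$ case; you correctly flag that $n=2$ needs the singular-value/majorisation machinery of \cite{Wolf2002}, which is in fact the only proof the paper offers for any $n$ --- and since the paper explicitly restricts its subsequent analysis to the three-setting inequality, your proof covers everything that is actually used.
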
 
\begin{proof}
Retracing the steps of the proof in \cite{Wolf2002} we observe 
$Tr(U \zeta_{B} U^\dagger \mathfrak{S} ) \le Tr(\zeta_{B} \mathfrak{S} ) $.
\end{proof}
Therefore, the Bell diagonal states are optimal with respect to global unitary operations under steering scenarios \ref{steering n}.\\
The criteria now follows as,
\begin{theorem}
A state $ \sigma \in \mathbf{AUS}_3 $ iff $ 3Tr(\sigma^2)-2(x_1 x_2+x_1 x_3+x_1 x_4+x_2 x_3+x_2 x_4+x_3 x_4) \le 1 $. Here $ x_i $ are the eigenvalues of $ \sigma $. \label{AUS}
\end{theorem}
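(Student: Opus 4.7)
The plan is to combine the two preceding lemmas with the explicit three-setting steerability formula established earlier in the manuscript. By Lemma 2, for any fixed spectrum the maximum of $F_3$ over the full unitary orbit is attained at the Bell diagonal state $\zeta_B$ built from the eigenvalues of $\sigma$. Since any $\sigma$ with spectrum $\{x_1,\ldots,x_4\}$ can be written as $V\zeta_B V^\dagger$ for some unitary $V$, and the unitaries form a group, the supremum of $F_3(U\sigma U^\dagger)$ over $U$ coincides with $F_3(\zeta_B)$. Consequently $\sigma\in\mathbf{AUS}_3$ iff $F_3(\zeta_B)\le 1$, and the problem reduces to evaluating $F_3(\zeta_B)$ purely in terms of the eigenvalues.

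For this step I would invoke the identity $F_3(\zeta_B)=\sqrt{\mathrm{Tr}(T^\dagger T)}$ recorded in the discussion following equation (\ref{diagonalT}), valid because Bell diagonal states have diagonal correlation matrix $T=\mathrm{diag}(c_1,c_2,c_3)$. Assigning $x_1,x_2,x_3,x_4$ to $|\Phi^+\rangle,|\Phi^-\rangle,|\Psi^+\rangle,|\Psi^-\rangle$ respectively and using the standard single-Pauli-product expectations on the four Bell states, each $c_i=\langle s_i\otimes s_i\rangle_{\zeta_B}$ becomes a signed sum $\sum_j \epsilon_{ij}x_j$ with $\epsilon_{ij}\in\{\pm 1\}$.

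The final step is the elementary expansion
\begin{equation*}
c_1^2+c_2^2+c_3^2 \;=\; 3\sum_j x_j^2 \;+\; 2\sum_{j<k}\Big(\sum_{i=1}^{3}\epsilon_{ij}\epsilon_{ik}\Big)\,x_j x_k .
\end{equation*}
A direct check on the six pairs $(j,k)$ shows that each inner sign sum $\sum_i\epsilon_{ij}\epsilon_{ik}$ equals $-1$, so $F_3(\zeta_B)^2=3\,\mathrm{Tr}(\sigma^2)-2\sum_{j<k}x_j x_k$. Imposing $F_3(\zeta_B)\le 1$ delivers the stated criterion.

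The main obstacle is the small but error-prone combinatorial verification that every one of the six cross-term coefficients equals $-1$. One can sidestep the bookkeeping by using $(\sum_j x_j)^2=1$ to rewrite the criterion in the equivalent purity form $\mathrm{Tr}(\sigma^2)\le 1/2$; this both furnishes an immediate consistency check against the known $p=1/\sqrt{3}$ Werner threshold and foreshadows the purity-based characterization promised in the abstract.
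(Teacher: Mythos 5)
Your proposal is correct and follows the same route as the paper: invoke the lemma that the Bell-diagonal state maximizes $F_3$ over the unitary orbit of a fixed spectrum, then evaluate $F_3(\zeta_B)^2=\mathrm{Tr}(T^\dagger T)$ in terms of the eigenvalues. In fact you supply the sign-matrix computation of the cross-term coefficients that the paper's proof simply asserts, and your closing remark that the criterion is equivalent to $\mathrm{Tr}(\sigma^2)\le 1/2$ is exactly the paper's subsequent corollary.
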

\begin{proof}
Any state can be brought to the respective Bell-diagonal state by a global unitary. Now, $ F_3(\zeta_B)^2 = 3(x_1^2 + x_2^2 + x_3^2 + x_4^2)-2(x_1 x_2+x_1 x_3+x_1 x_4+x_2 x_3+x_2 x_4+x_3 x_4) $. Hence the theorem.
\end{proof}
Similar to the case of absolutely separable states \cite{moor2001,lewinstein1998} an estimation of the size of the set $ \mathbf{AUS}_3 $can be obtained as below:
\begin{corollary}
(Estimation of the size of $ \mathbf{AUS}_3 $) A state $ \sigma \in \mathbf{AUS}_3 $ iff its purity is less than or equal to 1/2.
\end{corollary}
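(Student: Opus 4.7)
The plan is to reduce the criterion of Theorem~\ref{AUS} to a single inequality in the purity $\text{Tr}(\sigma^2)$ by eliminating the pairwise product sum using the trace condition $\sum_i x_i = 1$. Since the eigenvalues $x_1,\dots,x_4$ of a density matrix sum to one, squaring gives
\begin{equation}
1 = \Big(\sum_i x_i\Big)^2 = \sum_i x_i^2 + 2\sum_{i<j} x_i x_j = \text{Tr}(\sigma^2) + 2\sum_{i<j} x_i x_j,
\end{equation}
so the elementary symmetric sum appearing in Theorem~\ref{AUS} is $\sum_{i<j} x_i x_j = \tfrac{1}{2}(1-\text{Tr}(\sigma^2))$.

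Substituting this identity into the condition of Theorem~\ref{AUS}, I would then simplify
\begin{equation}
3\,\text{Tr}(\sigma^2) - (1 - \text{Tr}(\sigma^2)) \le 1,
\end{equation}
which collapses to $4\,\text{Tr}(\sigma^2) \le 2$, i.e.\ $\text{Tr}(\sigma^2) \le \tfrac{1}{2}$. Since each step is an equivalence, the biconditional claimed in the corollary follows immediately.

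There is no genuine obstacle here; the work is purely algebraic manipulation. The only thing to be careful about is making explicit that the criterion of Theorem~\ref{AUS} is basis-independent (it depends only on the spectrum $\{x_i\}$ and on $\text{Tr}(\sigma^2)$, both of which are unitary invariants), so that the reduction to a purity condition is consistent with the global-unitary-invariant nature of the set $\mathbf{AUS}_3$. I would briefly remark on this to justify calling the resulting inequality an honest \emph{estimation of the size} of $\mathbf{AUS}_3$ in the space of two-qubit states.
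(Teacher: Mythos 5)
Your proof is correct and follows essentially the same route as the paper: the paper simply states that the criterion of Theorem~\ref{AUS} "can be recast in the form $4\,\mathrm{Tr}(\sigma^2)\le 2$", and your use of $\bigl(\sum_i x_i\bigr)^2=1$ to eliminate the pairwise-product sum is precisely the algebra behind that recasting. Your added remark on unitary invariance of the spectrum is a sensible (if implicit in the paper) justification for reading the result as a size estimate.
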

\begin{proof}
	The criterion in theorem (\ref{AUS}) can be recasted in the form $ 4 Tr(\sigma^2) \le 2 $. This also gives an estimation of the size of the ball around the maximally mixed state in which all the states $ \in \mathbf{AUS}_3 $. In the Frobenius norm the ball is expressed as $ \vert \vert \sigma - I/4 \vert \vert \le 1/2 $.
\end{proof}
The above condition guarantees that absolute non-violation ($ \mathbf{AUS}_3 $) can be also be verified by a necessary and sufficient condition in terms of bloch parameters. Consider the state $ \sigma $ Hilbert-Schmidt form (\ref{bloch}). Then the purity of any state$ \sigma $ can be expressed as,
\begin{equation}
Tr(\sigma^2) = \frac{1}{4}(1+ \vert \vert \overrightarrow{e} \vert \vert^2 + \vert \vert \overrightarrow{f} \vert \vert^2 + \vert \vert G \vert \vert^2)
\end{equation}
where $ \overrightarrow{e}, \overrightarrow{f} $ are the local bloch vectors of $ \sigma $ and $ G $ its correlation matrix.\\
Therefore, $ \frac{1}{4}(1+ \vert \vert \overrightarrow{e} \vert \vert^2 + \vert \vert \overrightarrow{f} \vert \vert^2 + \vert \vert G \vert \vert^2) \le 1/2  $, if and only if $ \sigma \in \mathbf{AUS}_3 $. Hence, 15 measurements are required to verfy that a state $ \in \mathbf{AUS}_3 $. This is in contrast to the fact that in order to check unsteerabilty w.r.t inequality (\ref{steering 3}) only 9 measurements are required.\\\
\indent We also note the following theorem pertaining to the unsteerability of the reduced subsystems of a three qubit pure state.
\begin{theorem}
The reduced state $ \sigma_{AB} $ of any pure three-qubit state $ \vert \Upsilon \rangle_{ABC} $ belongs to $ \mathbf{AUS}_3 $ if and only if $ \sigma_{C} $ is the maximally-mixed state.
\end{theorem}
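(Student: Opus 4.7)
The plan is to reduce the problem to the purity criterion established in the preceding corollary, and then exploit the symmetry of spectra of reduced states of a pure bipartite state under the $(AB)|C$ cut.

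First, I would invoke the Schmidt decomposition of $|\Upsilon\rangle_{ABC}$ across the bipartition $(AB)|C$. Since $C$ is a single qubit, this yields
\begin{equation}
|\Upsilon\rangle_{ABC} = \sqrt{\lambda_0}\,|\phi_0\rangle_{AB}|0\rangle_C + \sqrt{\lambda_1}\,|\phi_1\rangle_{AB}|1\rangle_C,
\end{equation}
with $\lambda_0,\lambda_1\ge 0$ and $\lambda_0+\lambda_1 = 1$. The standard consequence is that the nonzero spectra of $\sigma_{AB}$ and $\sigma_C$ coincide, so in particular
\begin{equation}
\mathrm{Tr}(\sigma_{AB}^2) = \lambda_0^2+\lambda_1^2 = \mathrm{Tr}(\sigma_C^2).
\end{equation}

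Next, I would apply the corollary characterizing $\mathbf{AUS}_3$ by purity: $\sigma_{AB}\in \mathbf{AUS}_3$ iff $\mathrm{Tr}(\sigma_{AB}^2)\le 1/2$. The key observation is that, under the constraint $\lambda_0+\lambda_1=1$ with $\lambda_i\ge 0$, one has $\lambda_0^2+\lambda_1^2 \ge 1/2$, with equality if and only if $\lambda_0=\lambda_1=1/2$. Hence the inequality $\mathrm{Tr}(\sigma_{AB}^2)\le 1/2$ is in fact saturated, which forces $\lambda_0=\lambda_1=1/2$, and this is exactly the statement that $\sigma_C = I/2$ is maximally mixed. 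The converse direction is immediate by reading the same chain of equalities backwards: if $\sigma_C = I/2$ then $\mathrm{Tr}(\sigma_C^2)=1/2=\mathrm{Tr}(\sigma_{AB}^2)$, placing $\sigma_{AB}$ in $\mathbf{AUS}_3$ by the corollary.

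There is no real obstacle; the only subtle point worth stating explicitly in the write-up is that on the convex set $\{(\lambda_0,\lambda_1):\lambda_i\ge 0,\ \lambda_0+\lambda_1=1\}$ the function $\lambda_0^2+\lambda_1^2$ attains its minimum value $1/2$ uniquely at $(1/2,1/2)$, so the purity bound is tight precisely in the maximally-mixed case. Everything else is a direct appeal to the Schmidt decomposition and the purity characterization of $\mathbf{AUS}_3$ already proved above.
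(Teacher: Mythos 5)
Your proof is correct and takes essentially the same route as the paper's: both rest on the fact that the nonzero spectrum of $\sigma_{AB}$ coincides with that of the single-qubit marginal $\sigma_C$, so $\mathrm{Tr}(\sigma_{AB}^2)=\lambda_0^2+\lambda_1^2\ge 1/2$ with equality precisely when $\lambda_0=\lambda_1=1/2$. The paper writes the eigenvalues as $\{(1\pm l)/2,0,0\}$ via the Bloch vector of $\sigma_C$ and applies the spectrum criterion of Theorem~\ref{AUS}, whereas you invoke the Schmidt decomposition and the equivalent purity corollary; the difference is purely presentational.
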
 
\begin{proof}
Consider that $ l= \vert \overrightarrow {\textbf{l}}  \vert$ ,where $ \overrightarrow {\textbf{l}} $ is the bloch vector for $ \sigma_{C} $. Then the eigenvalues of $ \sigma_{AB} $ are $ \lbrace (1+l)/2 , (1-l)/2 , 0 , 0 \rbrace $. Hence, in view of theorem (\ref{AUS}) , $ \sigma_{AB} \in \mathbf{AUS}_3  $  iff $ l $ vanishes. Therefore, as a consequence if all the local bloch vectors pertaining to subsystems $ A,B,C $ are zero , then $ \sigma_{AB}, \sigma_{BC},\sigma_{AC} \in \mathbf{AUS}_3 $. 
\end{proof}
It is also of interest to also note that if Alice,Bob and Charlie share a pure 3-qubit state, with nonzero local bloch vectors, then any two of the parties can "cheat" and collaborate, via application of of some unitary, to be able to violate the steering inequality. 
\section{Illustrations}\label{sec:illus}
 We now provide some illustrations on the application of our criterion. \\
(i) \textit{Absolutely Separable states -} Any absolutely separable state will $ \in \mathbf{AUS}_3 $. This is because , absolutely separable states preserve their separability under global unitary operation. In set theoretic language, if we denote the set containing the absolutely separable states by $\mathbf{AS}$, then $\mathbf{AS}$ forms a subset of $\mathbf{AUS}_3$. \\
(ii) Any pure product state cannot be in  $ \mathbf{AUS}_3 $ as they can always be converted to a pure entangled state by some global unitary. \\
(iii) \textit{Werner states-} The Werner state is given as \cite{werner89}, $ \sigma_{wer} = p \vert \psi^{-} \rangle \langle \psi^{-} \vert + \frac{1-p}{4} I$($ \vert \psi^{-} \rangle = \frac{\vert 01 \rangle - \vert 10 \rangle }{\sqrt{2}} $). The state is absolutely separable for $ p \le 1/3 $, hence $ \in \mathbf{AUS}_3 $  in that range. The eigenvalues are $ \lbrace (1+3p)/4 , (1-p)/4 , (1-p)/4 , (1-p)/4 \rbrace $. For $ p \le 1/\sqrt{3} $ it is in $ \in \mathbf{AUS}_3 $. The range is depicted in Fig (\ref{fig:werner}).
\begin{figure}
	\centering
	\includegraphics[width=0.7\linewidth]{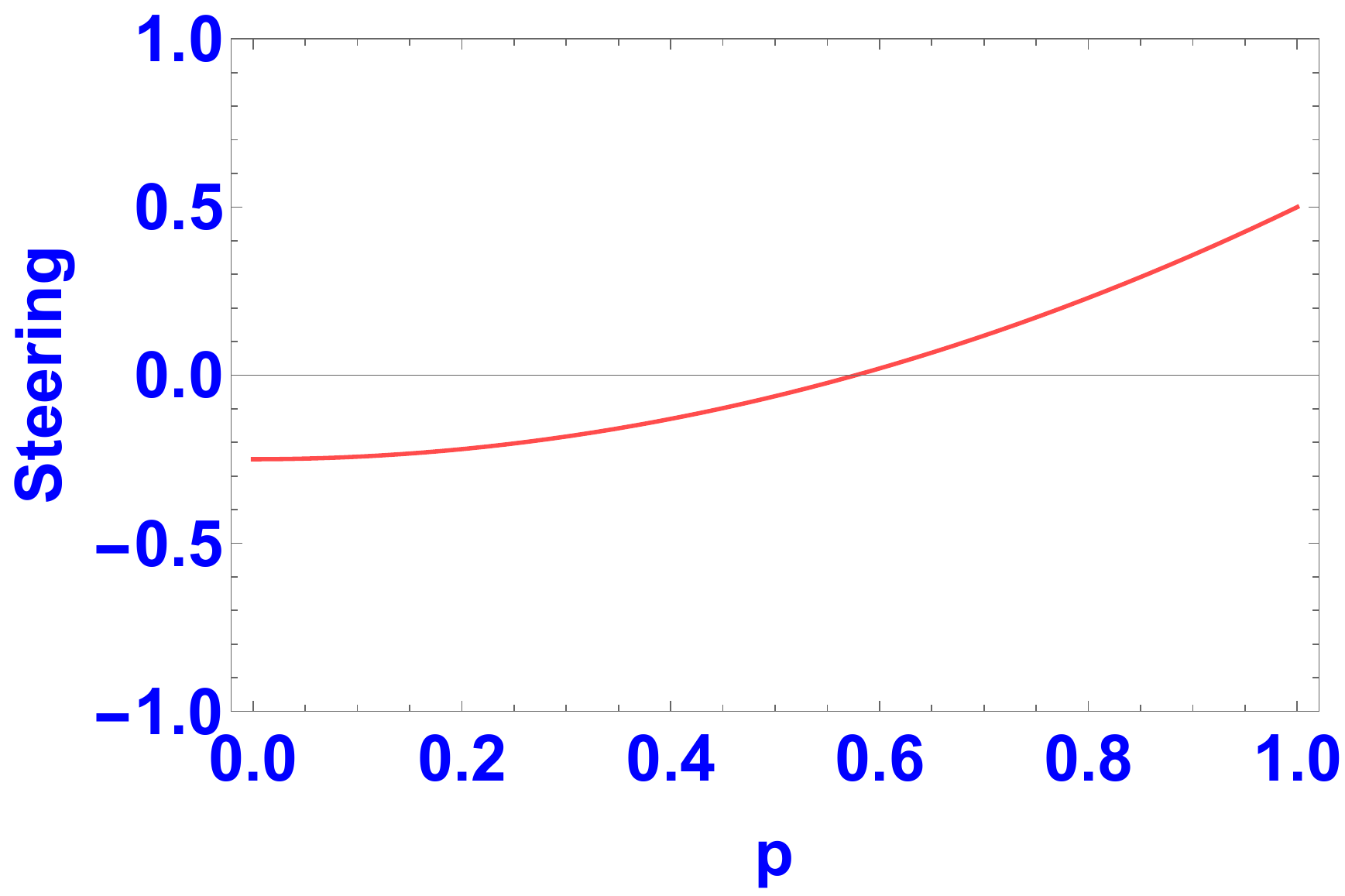}
	\caption{The vertical axis plots $ F_3-1 $.The curve cuts the $p$-axis for $p=\frac{1}{\sqrt{3}}$. So for any value of $p>=\frac{1}{\sqrt{3}},$ the Werner state is steerable after being subjected to suitable global unitary operations.}
	\label{fig:werner}
\end{figure}

(iv) \textit{Gisin states-} Gisin states were proposed in \cite{gisin96}.
Let $ \vert \psi_{\theta} \rangle  = sin \theta \vert 01 \rangle + cos \theta \vert 10 \rangle $ and $ \sigma_{mix} = \frac{1}{2}\vert 00 \rangle \langle 00 \vert + \frac{1}{2} \vert 11 \rangle \langle 11 \vert $ . Then the Gisin state is written as \cite{gisin96} ,
\begin{equation}
\sigma_{G} = \lambda \vert \psi_{\theta} \rangle  \langle \psi_{\theta} \vert + (1- \lambda) \sigma_{mix}
\end{equation}
with $ 0 < \theta < \pi/2, 0 \le \lambda \le 1 $. We observe that the Gisin states belong to $ \mathbf{AUS}_3 $ for $ \lambda \le 2/3 $.(See Fig(\ref{fig:gisin})) .\\
\begin{figure}
	\centering
	\includegraphics[width=0.7\linewidth]{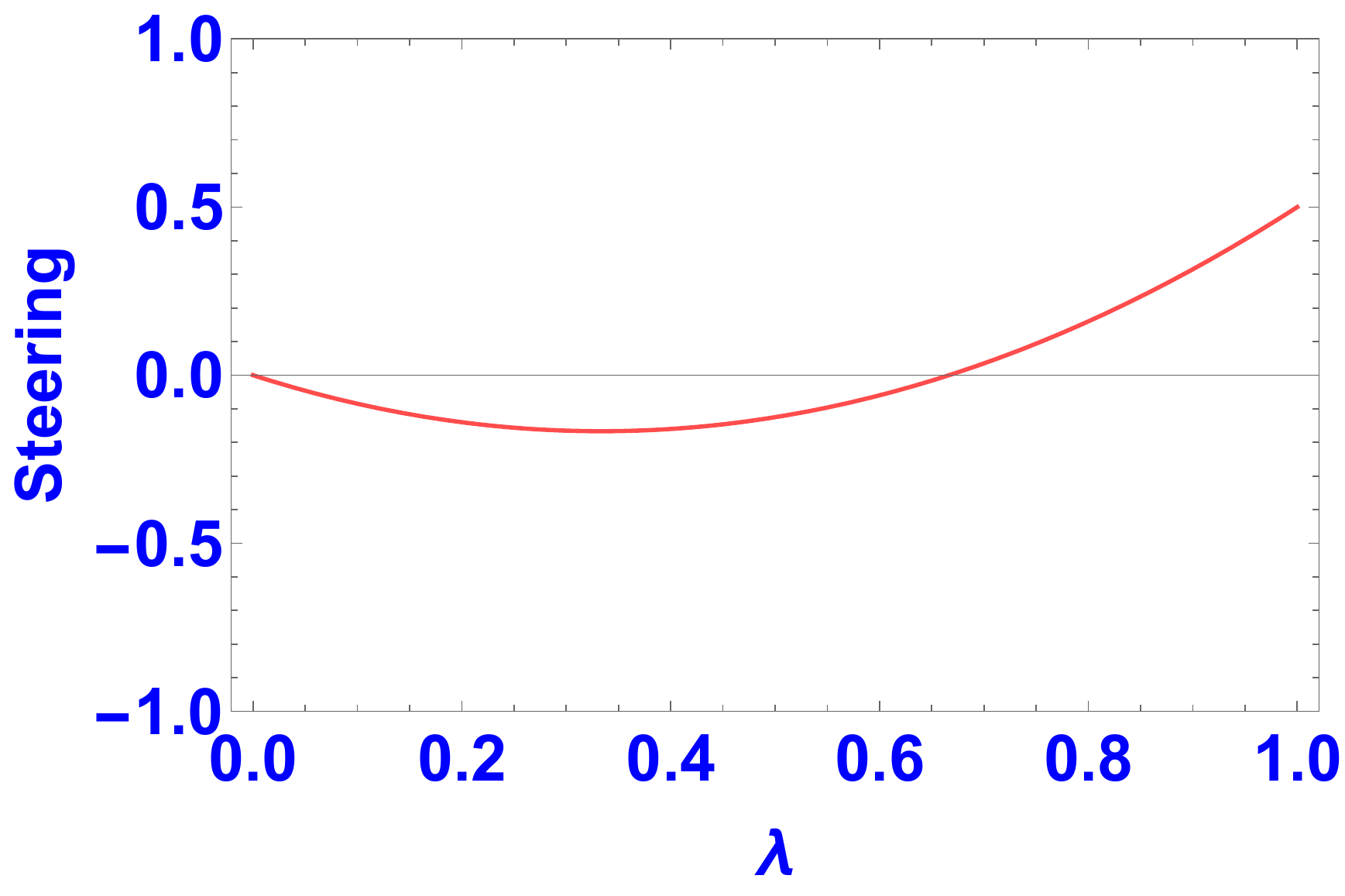}
	\caption{The vertical axis plots $ F_3-1 $.The portion of the curve above the positive direction of $\lambda$-axis indicates any Gisin state corresponding to this portion is steerable under action of suitable global unitary operations.}
	\label{fig:gisin}
\end{figure}

(v) \textit{$X$ states-} The $ X $ states are given as \cite{eberlyqic},
$ \sigma_{X} = v_1 \vert 00 \rangle \langle 00 \vert +v_2 \vert 01 \rangle \langle 01 \vert + v_3 \vert 10 \rangle \langle 10 \vert + v_4 \vert 11 \rangle \langle 11 \vert + v_5( \vert 00 \rangle \langle 11 \vert +  \vert 11 \rangle \langle 00 \vert) + v_6 ( \vert 01 \rangle \langle 10 \vert +\vert 10 \rangle \langle 01 \vert ) $. Here $ v_5^2 \le v_1 v_4 $ and $ v_6^2 \le v_2 v_3 $. According to our criterion,this state $ \in \mathbf{AUS}_3 $ iff $ \sum_{i=1}^{4} v_i^2 + 2(v_5^2 + v_6^2) \le 1/2$.\\
\section{Conclusion}\label{sec:concl}
\indent Steerability is a distinct notion of nonlocality weaker than Bell nonlocality but stronger than entanglement. Although envisaged by Schr\"{o}dinger, it was recently recasted in the form of a task in \cite{wiseman2013}. In the initial part of our present contribution we revisit steering inequalities and observe some typical features of steerability related to information processing tasks. \\
\indent The main focus of our work is to probe non-violation of some steering inequalities under global unitary action. Relations are derived between steerability and teleportation. We also analyse the optimal violation of steering inequalities under 3 measurement setting for two qubit systems.  Such states are characterized in terms of their spectrum and the size of the set to which they belong is also estimated. Interestingly we find the necessary and sufficient conditions for absolute non-violation w.r.t the inequality in terms of bloch parameters of the state.    Illustrations are provided to gain hands on insight to the impact of our work.\\
\indent Our work also opens future directions of research.In our work we have provided criterion for absoluteness of non-violation of a $3$-setting steering inequality by two-qubit states. There are steerable states which satisfy our criterion for absoluteness in terms of non-violation of the steering inequality. This implies that the proposed criterion does not imply that the given two-qubit state is also absolutely unsteerable in the $3$-setting
scenario. Therefore, one might be interested to probe absoluteness in terms of \textit{local hidden variable}--\textit{local hidden state} models for the measurement correlations arising from the given steering scenario.The extension of our work to other important steering inequalities in both bipartite and multipartite systems will be another area of useful investigation. 

\section*{Acknowledgement}

We would like to gratefully acknowledge fruitful discussions with Prof.Guruprasad Kar.AM acknowledges support from the CSIR project 09/093(0148)/2012-EMR-I.CJ acknowledges support through the Project SR/S2/LOP-08/2013 of the
DST, Govt. of India.

 \end{document}